\def \VersionAuthor {}
	\newcommand{\AuthorVersion}[1]{#1}
	\newcommand{\SpringerVersion}[1]{}
	\newcommand{\AuthorVersion}[1]{}
	\newcommand{\SpringerVersion}[1]{#1}
\definecolor{darkblue}{rgb}{0.0,0.0,0.6}
\definecolor{darkgreen}{rgb}{0, 0.5, 0}
\definecolor{darkpurple}{rgb}{0.7, 0, 0.7}
\definecolor{darkblue}{rgb}{0, 0, 0.7}
\crefname{line}{\text{line}}{\text{lines}} 
\DeclareMathOperator*{\argmax}{arg\,max}
\DeclareMathOperator*{\argmin}{arg\,min}
\newcommand{\BTrue}{\textsf{true}}
\newcommand{\BFalse}{\textsf{false}}
	\newcommand{\LongVersion}[1]{\ifdefined\VersionWithComments{\color{red!40!black}#1}\else#1\fi}
	\newcommand{\LongVersion}[1]{\ifdefined\VersionWithComments{\color{black!40}#1}\fi}
\tikzstyle{every node}=[initial text=]
\tikzstyle{location}=[rectangle, rounded corners, minimum size=12pt, draw=black, fill=blue!10, inner sep=2pt]
\tikzstyle{invariant}=[draw=black, dotted, inner sep=1pt] 
\definecolor{coloract}{rgb}{0.50, 0.70, 0.30}
\definecolor{colorclock}{rgb}{0.4, 0.4, 1}
\definecolor{colordisc}{rgb}{1, 0, 1}
\definecolor{colorloc}{rgb}{0.2, 0.2, 0.35}
\definecolor{colorparam}{rgb}{1, 0.6, 0.0}
\newcommand{\styleclock}[1]{\ensuremath{\textcolor{colorclock}{\mathit{#1}}}}
\newcommand{\styledisc}[1]{\ensuremath{\textcolor{colordisc}{\mathit{#1}}}}
\newcommand{\styleparam}[1]{\ensuremath{\textcolor{colorparam}{\mathit{#1}}}}
\newcommand{\cellHeader}[1]{\cellcolor{blue!20}\textbf{#1}}
\newcommand{\rowlHeader}{\rowcolor{blue!20}\bfseries}
	\newcommand{\marginX}{\marginnote{\huge{\quad\quad\textbf{!}\quad\quad}}}
	\newcommand{\ea}[1]{\mbox{}{\color{blue}\marginX{}\textbf{[\'Etienne}: #1]}}
	\newcommand{\lf}[1]{\mbox{}{\color{purple}\marginX{}\textbf{[Laurent}: #1]}}
	\newcommand{\rs}[1]{\mbox{}{\color{orange}\marginX{}\textbf{[Romain}: #1]}}
	\newcommand{\instructions}[1]{{\color{red}\marginX{}\textbf{[Instructions: ``#1'']}}}
	\newcommand{\reviewer}[2]{\mbox{}{\color{red}\marginX{}\textbf{[Reviewer #1}: ``#2'']}}
	\newcommand{\todo}[1]{\mbox{}{\color{red}{\marginX{}\textbf{TODO}\ifx#1\\\else:\ \fi #1}}} 
	\newcommand{\instructions}[1]{}
	\newcommand{\ea}[1]{}
	\newcommand{\lf}[1]{}
	\newcommand{\rs}[1]{}
	\newcommand{\reviewer}[2]{}
	\newcommand{\todo}[1]{}
\newcommand{\Mabs}{\widehat{M}}
\newcommand{\grandn}{{\mathbb N}}
\newcommand{\imitator}{\textsf{IMITATOR}}
\newcommand{\uppaal}{\textsc{Uppaal}}
\newcommand{\styleVar}[1]{\ensuremath{\mathit{#1}}}
\newcommand{\higherIDreceived}{\styleVar{higherIDreceived}}
\newcommand{\SenderID}{\styleVar{SenderID}}
\newcommand{\styleConst}[1]{\ensuremath{\mathsf{#1}}}
\newcommand{\modeOn}{\styleConst{On}}
\newcommand{\modeOff}{\styleConst{Off}}
\newcommand{\periodMin}{\styleConst{period_\mathsf{min}}}
\newcommand{\periodMax}{\styleConst{period_\mathsf{max}}}
\newcommand{\jitterMin}{\styleConst{jitter_\mathsf{min}}}
\newcommand{\jitterMax}{\styleConst{jitter_\mathsf{max}}}
\newcommand{\statusCandidate}{\styleConst{Candidate}}
\newcommand{\statusFollower}{\styleConst{Follower}}
\newcommand{\statusLeader}{\styleConst{Leader}}
 	\definecolor{colorok}{RGB}{80,80,150}
	\definecolor{colorok}{RGB}{0,0,0}
\newcommand{\eg}{\textcolor{colorok}{e.\,g.,}} 
\newcommand{\ie}{\textcolor{colorok}{i.\,e.,}} 
\newcommand{\aka}{\textcolor{colorok}{a.k.a.}\xspace}
\begin{document}

\sloppy

\title{Verification of an industrial asynchronous leader election algorithm using abstractions and parametric model checking%
	\thanks{%
		\AuthorVersion{This is the author version of the manuscript of the same name published in the proceedings of the 20th International Conference on Verification, Model Checking, and Abstract Interpretation (VMCAI 2019).}
		This work is partially supported by
		Institut Farman (ENS Paris-Saclay $\&$ CNRS),
		by the ANR national research program PACS (ANR-14-CE28-0002)
		and
		by ERATO HASUO Metamathematics for Systems Design Project (No.\ JPMJER1603), JST.
	}
}

\author{\'Etienne Andr\'e\inst{1,2,3} \and Laurent Fribourg\inst{4} \and Jean-Marc Mota\inst{5} \and Romain Soulat\inst{5}} 
\institute{Université Paris 13, LIPN, CNRS, UMR 7030, F-93430, Villetaneuse, France
\and JFLI, CNRS, Tokyo, Japan
\and National Institute of Informatics, Japan
\and LSV, ENS Paris-Saclay $\&$ CNRS $\&$ INRIA, U. Paris-Saclay
\and Thales Research and Technology, Palaiseau, France
}

\pagestyle{plain}

\maketitle

\thispagestyle{plain}

\ifdefined \VersionWithComments
	\textcolor{red}{\textbf{This is the version with comments. To disable comments, comment out line~3 in the \LaTeX{} source.}}
\fi

\begin{abstract}
	The election of a leader in a network is a challenging task, especially when the processes are asynchronous, \ie{} execute an algorithm with time-varying periods.
	Thales developed an industrial election algorithm with an arbitrary number of processes, that can possibly fail.
	In this work, we prove the correctness of a variant of this industrial algorithm.
	We use a method combining abstraction, the SafeProver solver, and a parametric timed model-checker.
	This allows us to prove the correctness of the algorithm for a large number~$p$ of processes ($p=5000$).
\end{abstract}

\begin{keywords}
	leader election, distributed algorithm, model checking, SaveProver, parameterized verification, parametric timed automata
\end{keywords}

\instructions{
Submissions are restricted to 20 pages in Springer’s LNCS format, not counting references. Additional material may be placed in an appendix, to be read at the discretion of the reviewers and to be omitted in the final version.
}

\section{Introduction}

Distributed systems, where entities communicate with each other, are booming in our societies.
Drones communicating with each other, swarms of various objects, intelligent cars…\ all may face communication and leadership issues.
Therefore, the algorithm that all entities execute should be verified.
Thales developed an industrial election algorithm with an arbitrary number of processes, that can possibly fail.
We cannot describe the code of the actual algorithm for confidentiality issues.
Therefore, we consider a modified variant of the algorithm.
%
This algorithm focuses on the election of a leader in a distributed system with a potentially large number of entities or \emph{nodes} in an \emph{asynchronous} environment.
Our main contribution is to perform a formal verification of the algorithm correctness for a large number of nodes.
By correctness, we mean the actual election of the leader after a fixed number of rounds.

We consider here a special form of the general leader election problem \cite{lynch96}:
we assume that, in the network,
all the processes (or \emph{nodes}) have a specific ID number, and they execute the same code 
({\em symmetry}) 
in order to agree which ID number is the highest one.
In the synchronous context where all processes communicate simultaneously, the problem is often solved using the ``Bully algorithm'' \cite{bully}.
In the asynchronous context where each process communicates with a specific period 
possibly subject to delay variation (jitter), the problem is much more difficult.
Periods can be all slightly different from each other, which makes the problem particularly complex.
For example, a classical distributed leader election protocol, where the nodes exchange data using broadcasting, was designed by Leslie Lamport \cite{Lamport98} in the asynchronous context.
The correctness of this algorithm  was proved mechanically many times using, \eg{} TLA$^+$ tool \cite{TLA+}, or, more recently, using the timed model checking tool \uppaal{}~\cite{Kim}. 
However, these automated proofs work only for a small number~$p$ of processes, typically for $p \leq 10$.
In this paper, we  present a technique to prove the correctness of such a distributed leader election using automated tools for a large number of nodes (\eg{} $p=5000$).
The principle of the method relies on the
abstraction method consisting in viewing the network from the point of view of a specific (but arbitrary) node, say $node_i$, and considering the rest of the nodes of the network as an abstract environment interacting with $node_i$.
In this abstract model, two basic properties of the algorithm can be proven. 
However, in order to prove the full correctness of the leader election algorithm, we will need an auxiliary model, where some timing information is added to (a raw form of) the abstract model. Using this auxiliary timed model, we are able to prove an additional property 
of the leader election algorithm. 
Thanks to 
the three aforementioned properties added as assumptions,
we can then prove the full correctness of the leader election algorithm, using
	the bounded model checker SafeProver~\cite{Etienne:2017}
on the abstract model.

The leader election algorithm we use is not Lamport's algorithm, but a simple asynchronous form of the Bully algorithm.
\ea{Removed: ``This variant of the Bully algorithm works in our asynchronous framework because we restrict ourselves to a specific framework of network structure and asynchronous form of communications.''}%
We consider a specific framework of network structure and asynchronous form of communications.
Basically, we assume that:
\begin{enumerate}
	\item the graph is complete (every node communicate with all the other ones).
	\item the communications are instantaneous (the time between the sending of a message and its reception is null),
	and the nodes 
		exchange data via synchronous one-way unconditional value passing.
	\item the processes are {\em visibly faulty}, \ie{} they always execute the generic code of the algorithm, trying to elect the leader when they are 
	non-faulty (mode $\modeOn{}$), and do nothing when they are faulty (mode \modeOff{}). 
\end{enumerate}

\subsection*{Relationship with Thales' actual algorithm}
As mentioned above, for confidentiality issue, we cannot reveal the original algorithm developed at Thales.
Nevertheless, it is in essence the same as the one we present.
Only the executed code has been modified.
In addition, the technique presented in this paper was designed for and applied to the original algorithm.
To summarize, we present exactly the methodology, up to the content of the $UpdateNode$ code (that is still similar in spirit).

After its verification using the techniques we present here, the original algorithm has been implemented in~C, and is nowadays running in one of the Thales products.
This product embeds a standard processor (in the line of Intel~X86), with some limited RAM, hard drive, Ethernet ports, etc.

\subsection*{Related work}
The method proposed here makes use of several powerful techniques
such as counter abstraction,
	bounded model checking
	and
	parametric timed model checking
for verifying distributed fault-tolerant algorithms,
similarly to what has been recently described, \eg{}  in \cite{KVW15}. 
As said in \cite{KVW15}: ``Symmetry allows us to change representation
into a {\em counter representation} (also referred to as `counter abstraction'): (...) Instead of recording which process is in which local state, we record for each local state, how many processes are in this state.  Thus, we need one counter per local state $\ell$, hence we have a fixed number of counters. A step by a process that goes from local state $\ell$ to local state $\ell'$ is modeled by decrementing the counter associated with $\ell$ and incrementing the counter associated with $\ell'$. When the number~$p$ of processes is fixed, each counter is bounded by~$p$.'' The work described in \cite{KVW15} makes use of SMT solvers \cite{SMT} in order to perform finite-state model checking of the abstracted model. 

Our work can be seen as a new application of such techniques 
to (a variant of) an industrial election algorithm.
Another originality is to combine counter abstraction, bounded model checking, with \emph{parametric timed} model checking.

In an orthogonal direction, the verification of identical processes in a network, \ie{} a unknown number of nodes running the same algorithm, has been studied in various settings, notably in the long line of work around regular model checking~\cite{FO97,BJNT00}, and in various settings in the \emph{timed} case~\cite{AJ03,ADRST11,ADRST16}.
However, the focus of that latter series of works is on decidability, and they do not consider real-world algorithms, nor do they have tools implementing these results.

Finally, the line of works around the Cubicle model-checker \cite{CGKMZ12,CDZ17,CDZ17b} performs parameterized verification of cache memory protocols, that is also parameterized in the number of processes.
However, timing parameters are not present in these works.

\ea{In final version: cite \cite{KLF18}, as they also use different techniques, and untimed/timed techniques (also Uppaal with TA + proofs)}

\subsection*{Outline}
The rest of the paper is organized as follows.
\cref{section:algo} introduces the variant of the leader election algorithm we consider.
\cref{section:direct} presents our direct verification method for a small number of nodes.
\cref{section:abstraction} presents our abstraction-based verification for a much larger number of nodes.
\cref{section:conclusion} concludes the manuscript and outlines perspectives.






\section{An asynchronous leader election algorithm}\label{section:algo}


Thales recently proposed a leader election algorithm.
This simple leader election algorithm is based on the classical Bully algorithm originally designed for the synchronous framework~\cite{bully}.
Basically, all nodes have an ID (all different), and the node with the largest ID must be elected as a leader.
This algorithm is asynchronous.
As usual, each node runs the same version of the code.
We cannot describe the code of the actual algorithm for confidentiality issues, and we therefore consider and prove a modified variant of Thales' original algorithm, described throughout this section.


\subsection{Periods, jitters, offset}

\LongVersion{Let us first describe the context and the hypotheses.}
The system is a fixed set of $p$~nodes $\mathcal{N} = \{node_1,\ldots,node_p\}$, for some $p \in \grandn$.
Each node $node_i$ is defined by:
\begin{enumerate}
	\item its integer-valued \emph{ID} $node_i.id \in \grandn$,
	\item its rational-valued \emph{activation period} $node_i.per \in [\periodMin{}, \periodMax{}]$,
	\item its rational-valued \emph{first activation time} $node_i.start\in [0,node_i.per]$ (which can be seen as an \emph{offset}, with the usual assumption that the offset must be less than or equal to the period),
		and
	\item its rational-valued \emph{jitter} values represent a delay variation for each period belonging to $[\jitterMin{}, \jitterMax{}]$, which is a static interval defined for all nodes and known beforehand.
\end{enumerate}
Observe that all periods are potentially different (even though they are all in a fixed interval, and each of them remains constant over the entire execution), which makes the problem particularly challenging.
In contrast, the jitter is different at each period (this is the common definition of a jitter), and the jitter of node~$i$ at the $j$th activation is denoted by $jitter_i^j$.
The $j$th activation of node $node_i$ therefore takes place at time $t_i^j = t_i^{j-1} + node_i.per + jitter_i^j$.
We have besides:
$t_i^0 = node_i.start$.

The concrete values for the static timing constants are given in \cref{table:constants}.

\begin{table}[tb]
	\centering
	\caption{Constants (in ms)}
	\begin{tabular}{l | r}
		\hline
		\cellHeader{Constant} & \cellHeader{Value} \\
		\hline
		$\periodMin{}$ & $49$ \\
		\hline
		$\periodMax{}$ & $51$ \\
		\hline
		$\jitterMin{}$ & $-0.5$ \\
		\hline
		$\jitterMax{}$ & $0.5$ \\
		\hline
	\end{tabular}

	\label{table:constants}
\end{table}

\begin{example}\label{example:3}
	Assume the system is made of three nodes.
	Assume $node_1.per = 49$.
	Recall that a period is an arbitrary \emph{constant} in a predefined interval.
	Assume $node_1.start = 0$.
	
	Assume $node_2.per = 51$ and $node_2.start = 30$.
	
	Assume $node_3.per = 49$ and $node_3.start = 0.1$.

	Also assume the jitters for the first three activations of the nodes given in \cref{table:example}.
	
	\begin{table}[tb]
		\centering
		\caption{Jitter values for \cref{example:3}}
			\begin{tabular}{l | r | r | r}
				\hline
				\rowlHeader{}
				& $jitter^1$ & $jitter^2$ & $jitter^3$ \\
				\hline
				Node 1 & $0.5$ & $-0.5$ & $0.5$ \\
				\hline
				Node 2 & $0$ & $0.1$ & $0$ \\
				\hline
				Node 3 & $0.1$ & $0.3$ & $0.5$ \\
				\hline
			\end{tabular}

		\label{table:example}
	\end{table}
	
	We therefore have
		$t_1^0 = 0$,
		$t_1^1 = 49.5$,
		$t_1^2 = 97.5$,
		$t_1^3 = 147.5$,
		$t_2^0 = 30$,
		$t_2^1 = 81$,
		$t_2^2 = 132.1$,
		$t_2^3 = 183$,
		$t_3^0 = 0.1$,
		$t_3^1 = 48.6$,
		$t_3^2 = 98.4$,
		$t_3^3 = 147.6$.
	The first activations of the nodes are depicted in \cref{figure:chronogramme}.
	Due to both uncertain periods and the jitters, it can happen that, between two consecutive activations of a node, another node may not be activated at all:
		for example, between $t_3^0$ and $t_3^1$, node~$1$ is never activated, and therefore node~3 will not receive a message from node~1 during this interval.
	Conversely, between two consecutive activations of a node, another node is activated twice:
		for example, between $t_3^1$ and $t_3^2$, node~$1$ is activated twice (\ie{} $t_1^1$ and $t_1^2$), and therefore node~3 may receive two messages from node~1 during this interval.
	
	Finally note that, in this example, the number of activations since the system start for nodes~1 and~3 is always the same at any timestamp, up to a difference of~1 (due to the jitters) because they have the same periods.
	In contrast, the number of activations for node~2 will be smaller than that of nodes~1 and~3 by an increasing difference, since node~2 is slower (period: 51 instead of~49).
	This phenomenon does not occur when periods are equal for all nodes, and makes this setting more challenging.
\end{example}

\begin{figure*}[!ht]
	\centering

		\begin{tikzpicture}[scale=.7, auto, ->, >=stealth']
		\footnotesize
		
		\tikzstyle{point}=[circle,fill,inner sep=1pt] 
		\tikzstyle{barreemphase}=[-,densely dotted,black!50]
		
		\draw[-] (0, 0) -- (0,3.5);
  
		\draw[thick] (0, 0) --++ (16, 0) node[right] {$t$};
		\foreach \x in {0, 10, ..., 150} 
			\draw[-] (\x/10, 0) --++ (0, -.1) node [below] {\scriptsize{$\x$}};

		\draw[red] (0, 1) node[left] {$node_1$} --++ (16, 0);

		\draw[blue] (0, 2) node[left] {$node_2$} --++ (16, 0);
		
		\draw[green] (0, 3) node[left] {$node_3$} --++ (16, 0);
		
		\node[point,red,label=above right:$t_1^0$] at (0, 1) (t10){};
		\node[point,red,label=above right:$t_1^1$] at (4.95, 1) (t11){};
		\node[point,red,label=above right:$t_1^2$] at (9.75, 1) (t12){};
		\node[point,red,label=above right:$t_1^3$] at (14.75, 1) (t13){};
		\draw[barreemphase] (t11) --++ (0, -1);
		\draw[barreemphase] (t12) --++ (0, -1);
		\draw[barreemphase] (t13) --++ (0, -1);
		
		\node[point,blue,label=above right:$t_2^0$] at (3, 2) (t20){};
		\node[point,blue,label=above right:$t_2^1$] at (8.1, 2) (t21){};
		\node[point,blue,label=above right:$t_2^2$] at (13.21, 2) (t22){};
		\draw[barreemphase] (t20) --++ (0, -2);
		\draw[barreemphase] (t21) --++ (0, -2);
		\draw[barreemphase] (t22) --++ (0, -2);
		
		\node[point,green,label=above right:$t_3^0$] at (0.01, 3) (t30){};
		\node[point,green,label=above right:$t_3^1$] at (4.86, 3) (t31){};
		\node[point,green,label=above right:$t_3^2$] at (9.84, 3) (t32){};
		\node[point,green,label=above right:$t_3^3$] at (14.76, 3) (t33){};
		\draw[barreemphase] (t30) --++ (0, -3);
		\draw[barreemphase] (t31) --++ (0, -3);
		\draw[barreemphase] (t32) --++ (0, -3);
		\draw[barreemphase] (t33) --++ (0, -3);
		
	\end{tikzpicture}
	
	\caption{Activation of three nodes with uncertain periods and jitters}
	\label{figure:chronogramme}
\end{figure*}
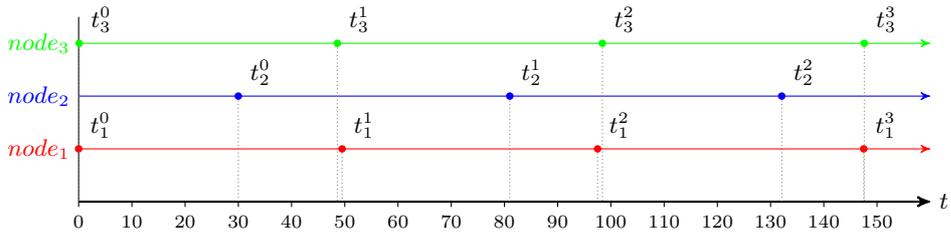

\begin{remark}\label{remark:constants}
	The rest of this paper assumes the constant values given in \cref{table:constants}.
	However, our method remains generic for constants of the same order of magnitude.
	Here, the variability of the periods is reasonably limited (around 4\,\%).
	A variability of more than 40\,\% will endanger the soundness of our method, as our upcoming assumption that between any three consecutive activations of a node, all nodes execute at least once, would not hold anymore.
\end{remark}

\subsection{IDs, modes, messages}
We assume that all the IDs of the nodes in the network are different.
Each node executes the same code.
Each node has the ability to send messages to all the nodes in the network, and can store (at least) one message received from any other node in the network. 
Nodes are either in mode $\modeOn{}$ and execute the code at each activation time, or do nothing when they are in mode~\modeOff{}.
(This models the fact that some nodes in the network might fail.)
A node in mode $\modeOn{}$ is in one of the following states:
\begin{itemize}
	\item \statusFollower{}: the node is not competing to become leader;
	\item \statusCandidate{}: the node is competing to become leader;
	\item \statusLeader{}: the node has declared itself to be the leader.
\end{itemize}
Each transmitted message is of the form: $message = (SenderID, state)$ where $state$
is the state of the sending node.

\subsection{The algorithm}

\begin{algorithm}[tb]

	\If{$node_i.EvenActivation $}{
		\nllabel{algo:node:if:begin}
		 $allMessages \leftarrow ReadMailbox()$
		 
		$\higherIDreceived{} \leftarrow \BFalse$ \nllabel{algo:node:higher:begin}
		
		\ForEach{$message \in allMessages$}{
			\If{$message.\SenderID{} > node_i.id$}{
				$state_{next} \leftarrow \statusFollower{}$ \nllabel{algo:node:higher:follower}
				
				$\higherIDreceived{} \leftarrow \BTrue{}$
			} 
			\nllabel{algo:node:higher:end}
		} 
		\If{$\neg\ \higherIDreceived{}$}{
			\uIf{$node_i.state = \statusFollower{}$}{
				$state_{next} \leftarrow \statusCandidate{}$ \nllabel{algo:node:upgrade:follower}
			} 
			
			\uElseIf{$node_i.state = \statusCandidate{}$}{
				$state_{next} \leftarrow \statusLeader{}$ \nllabel{algo:node:upgrade:candidate}
			} 
			
			\ElseIf{$node_i.state = \statusLeader{}$}{
				$state_{next} \leftarrow \statusLeader{}$ \nllabel{algo:node:upgrade:leader}
			} 
			
		} 
		
		$node_i.state \leftarrow state_{next}$
		
		\nllabel{algo:node:if:end}
	} 
	$node_i.EvenActivation \leftarrow \neg node_i.EvenActivation$ \nllabel{algo:node:swap}
	
	$message = \{node_i.id; node_i.state\}$ \nllabel{algo:node:prepare-message}
	
	$Send\_To\_All\_Network(message)$ \nllabel{algo:node:broadcast}

	\caption{$UpdateNode(i)$}
	\label{algo:UpdateNode}
\end{algorithm}

At each new activation, $node_i$ executes the code given in \cref{algo:UpdateNode}.
%
In short, if the Boolean flag $node_i.EvenActivation$ (which we can suppose being initially arbitrary) is true, then the code \cref{algo:node:if:begin}--\cref{algo:node:if:end} is executed.
In this code, the node first reads its mailbox, and checks whether any message contains a higher node ID than the node ID (\cref{algo:node:higher:begin}--\cref{algo:node:higher:end}) and, if so, sets itself as a follower (\cref{algo:node:higher:follower}).
If no higher ID was received, the node ``upgrades'' its status from follower to candidate (\cref{algo:node:upgrade:follower}), from candidate to leader (\cref{algo:node:upgrade:candidate}), or remains leader if already leader (\cref{algo:node:upgrade:leader}).

Finally (and this code is executed at every iteration), the node swaps the Boolean flag $EvenActivation$ (\cref{algo:node:swap}), prepares a message with its ID and current state (\cref{algo:node:prepare-message}) and sends this message to the entire network (\cref{algo:node:broadcast}).
We assume that the $Send\_To\_All\_Network$ function sends a message to all nodes---including the sender.

%
We can see that the significant part of the code (\cref{algo:node:if:begin}--\cref{algo:node:if:end}) is only executed once every two activations (due to Boolean test $node_i.EvenActivation $).
This is enforced in order to ensure that each node executes the code after receiving at least one message from all the other nodes (in mode $\modeOn{}$). 
However, note that each node sends a message at each iteration.

The order of magnitude of the constants in \cref{table:constants} gives the immediate lemma.

\begin{lemma}\label{lemma:2}
	Assume a node~$i$ and activation times $t_i^{j}$ and $t_i^{j+2}$.
	Then in between these two activations, node~$i$ received at least one message from all nodes.
\end{lemma}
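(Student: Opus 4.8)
The plan is elementary: bound from below the duration of the interval $[t_i^{j}, t_i^{j+2}]$, bound from above the spacing between two consecutive activations of any other node, and observe that the latter is strictly smaller than the former, so every other (mode-$\modeOn{}$) node is necessarily activated --- and hence broadcasts --- strictly inside that interval.

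First I would record the inter-activation bound. From $t_k^{m} = t_k^{m-1} + node_k.per + jitter_k^{m}$ with $node_k.per \in [\periodMin{}, \periodMax{}]$ and $jitter_k^{m} \in [\jitterMin{}, \jitterMax{}]$, every node~$k$ satisfies
\[
	\periodMin{} + \jitterMin{} \ \leq\ t_k^{m} - t_k^{m-1} \ \leq\ \periodMax{} + \jitterMax{},
\]
\ie{} $48.5 \leq t_k^{m} - t_k^{m-1} \leq 51.5$ for the constants of \cref{table:constants}. Summing the lower bound over two consecutive steps of node~$i$ yields $t_i^{j+2} - t_i^{j} \geq 2 \times 48.5 = 97$.

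Next, fix any node~$k$ running in mode $\modeOn{}$ (nodes in mode $\modeOff{}$ being irrelevant to the claim). Let $m$ be least with $t_k^{m} > t_i^{j}$ (well defined, since a mode-$\modeOn{}$ node is activated infinitely often). If $m \geq 1$, minimality gives $t_k^{m-1} \leq t_i^{j}$, hence $t_k^{m} \leq t_k^{m-1} + 51.5 \leq t_i^{j} + 51.5 < t_i^{j} + 97 \leq t_i^{j+2}$; if $m = 0$ then $t_k^{0} = node_k.start \leq \periodMax{} = 51 < 97 \leq t_i^{j+2}$ (using $t_i^{j} \geq 0$). In both cases $t_k^{m} \in (t_i^{j}, t_i^{j+2})$. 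At activation $t_k^{m}$, node~$k$ runs \cref{algo:UpdateNode}, and in particular executes \cref{algo:node:broadcast} --- which lies outside the $node_k.EvenActivation$ test and is therefore performed at \emph{every} activation --- so it sends a message to the whole network, including node~$i$. As communications are instantaneous, node~$i$ receives this message at time $t_k^{m} \in (t_i^{j}, t_i^{j+2})$, which is the claim.

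There is no genuine obstacle --- this is why the statement is called immediate --- but three points must not be skipped: one has to use that the broadcast (\cref{algo:node:broadcast}), unlike the ``significant'' block, is executed on every activation and not every other one; the statement concerns only mode-$\modeOn{}$ nodes; and the inequality $97 > 51.5$ used above holds with a wide margin, which is precisely the robustness of the timing constants noted in \cref{remark:constants} and what would collapse once the period variability exceeds roughly $40\,\%$.
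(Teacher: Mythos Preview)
Your argument is correct and is precisely the elementary computation the paper has in mind: the paper's own proof is the single line ``From \cref{table:constants,algo:UpdateNode}'', and you have merely made explicit the two bounds ($t_i^{j+2}-t_i^{j}\geq 97$ versus consecutive activations of any node at most $51.5$ apart) together with the observation that the broadcast at \cref{algo:node:broadcast} occurs on every activation. Nothing further is needed.
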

\begin{proof}
	From \cref{table:constants,algo:UpdateNode}.
\end{proof}

\begin{remark}
	For different orders of magnitudes, we may need to execute the code once every more than two activations.
	For example, if we set $\jitterMin{} = - 25$ and $\jitterMax{} = 25$ in \cref{table:constants}, the code should be executed every three activations for our algorithm to remain correct.
\end{remark}

\subsection{Objective}

We first introduce the following definitions.

\begin{definition}[round]
	A {\em round} is a time period during which all the nodes that are $\modeOn{}$ have sent at least one message.
\end{definition}

\begin{definition}[cleanness]\label{definition:cleanness}
	A round is said to be {\em clean} if during its time period no node have been switched from $\modeOn{}$ to \modeOff{} or from \modeOff{} to $\modeOn{}$.
\end{definition}

The correctness property that we want to prove is:

	\smallskip

\noindent\fcolorbox{black}{blue!15}{
	\begin{minipage}{.95\columnwidth}
		``{\bfseries When, following a preliminary clean round, 4 new clean rounds occur, 
the node with the highest ID is recognized as the leader by all the nodes in modes $\modeOn{}$ of the network.}''
	\end{minipage}
}
	
	\smallskip

This property is denoted by (P) in the following.

\begin{remark}[fault model]
	Our model does allow for faults but, according to \cref{definition:cleanness}, only prior to the execution of the algorithm.
	That is, once it has started, all nodes remain in \modeOn{} or \modeOff{} during its entire execution.
	If in reality there is a fault during the execution, it suffices to consider the execution of the algorithm at the next clean round.
\end{remark}

\section{Direct verification of the leader election algorithm}\label{section:direct}

In this section, we first verify our algorithm for a fixed number of processes.

We describe here the results obtained by 
	SafeProver
on a model~$M$ representing directly a network of a fixed, constant number of~$p$ processes (without abstraction); for a small number~$p$ of nodes,
we thus obtain a simple proof of the correctness of the algorithm. 
The model includes explicitly a representation of each node of $\mathcal{N}$ 
as well as their associated periods, first activation times, local memories, and mailboxes of received messages.
The code is given in \cref{algo:modelM}.
The mailbox is represented as a queue, initially filled with a message from oneself.\footnote{%
	An initial empty mailbox would do as well but, in the actual Thales system, this is the way the initialization is performed.
}

\begin{algorithm}[tb]

	$Activation[1,\ldots,p] \leftarrow [0,\ldots,0]$
	
	\tcp{Network initialization}

	\ForEach{$i\in\{1,\ldots,p\}$}{
			$node_{i}.id \in \grandn$
			
			$node_{i}.per \in [\periodMin{}; \periodMax{}]$
			
			$node_{i}.start \in [0; node_{i}.per]$ \ea{Romain avait mis 1, j'ai remplacé par 0}
			
			$node_{i}.state \in \{\statusFollower{},\statusCandidate{},\statusLeader{}\}$
			
			$node_{i}.EvenActivation \in \{\BTrue{},\BFalse\}$
			
					$node_{i}.mode \in \{\modeOn{}, \modeOff{}\}$
					
		$nextActivationTime(i) \leftarrow node_{i}.start$
	} 

	\tcp{Mailboxes initializations}
	\ForEach{$i\in\{1,\ldots,p\}$}{
		
		\tcp{Arbitrary mailbox initialization with a message from oneself}
		$node_{i}.mailbox \leftarrow [(node_{i}.id , \statusFollower{})] $
	} 
	\ForEach{$i\in\{1,\ldots,p\}$}{
		\ForEach{$j\in\{1,\ldots,p\}$}{
			\If{$node_j.mode = \modeOn{}$}{
					$node_{i}.mailbox.enqueue(message_{j})$
			} 
		} 
	} 

	\tcp{Main algorithm}
	\While{$\BTrue$}{
		$i \leftarrow \argmin(nextActivationTime)$
		
		\If{$node_{i}.mode = \modeOn{}$}{
			$UpdateNode(i)$
			
				$Activation(i) \leftarrow Activation(i)+1$
				
			$jitter \in [\jitterMin{}, \jitterMax{}]$
			
			$nextActivationTime(i) \leftarrow nextActivationTime(i) + node_{i}.per + jitter$
		} 
	} 

	\caption{SafeProver code for model~$M$}
	\label{algo:modelM}
\end{algorithm}

During the initialization declaration, we set everything as free variables (with some constraints, \eg{} on the periods) in order to have
no assumptions on the state of the network at the beginning.
This ensures that this model is valid whatever happened in the past, and this can be seen as a symbolic initial state: this notion of symbolic initial state was used to solve a challenge by Thales~\cite{SAL15}, also featuring uncertain periods.
We also fully initialize the mailboxes of all the nodes since we are assuming that we are right after a clean round.
The variable $Activation$ is used as a variable to store how many times a node has been executed after the last clean round.
The code of called function $UpdateNode(i)$ corresponds exactly to \cref{algo:UpdateNode}.

The property (P) we want to prove is formalized as:\\
$ \big(\forall i \in \{1,\ldots,p\}, Activation(i) \geq 4\big)\ \Rightarrow$\\
$\hspace*{4mm}\big(\forall j\in\{1,\ldots,p\}, j \neq maxId: node_j.state = \statusFollower{} $\\
\hspace*{6mm}$\ \land\ node_{maxId}.state = \statusLeader{}\big)$\\
with $ maxId = \argmax(\{node_i.id \mid node_i.mode = \modeOn{} \}_{i\in\{1,\ldots,p\}})$.
Using this model and SafeProver \cite{Etienne:2017}, we obtain 
the proof of (P) with the times tabulated in~\cref{table:SMT}.\footnote{%
	All the experiments reported in this paper have been run on a machine with two Intel$^{\textregistered}$ Xeon$^{\textregistered}$ CPU E5-2430 at~2.5\,GHz,
with 164\,GiB of RAM  and running a Debian~9 Linux distribution.
\todo{page Web avec URL !!! \url{https://www.imitator.fr/static/VMCAI19/}}
}
While this method allows us to formally prove the leader election for up to~5 nodes, SafeProver times out for larger number of nodes.
This leads us to consider another method to prove the correctness of our algorithm for larger numbers.
 

\begin{table}[tb]
	\centering
	\caption{Computation times}
	\begin{tabular}{| c | r |}
	\hline
		\cellHeader{Nodes} & \cellHeader{Time (s)}\\
	\hline
	$p=4$ & $66.65$\\ 
	\hline
	$p=5$ & $215.61$\\
	\hline
	$p=6$ & time out ($> 3600$)\\
	\hline
	\end{tabular}

	\label{table:SMT}
\end{table}

\section{Abstraction-based method}\label{section:abstraction}
We now explain how to construct an abstract model $\Mabs{}$ of the original model~$M$.
This model $\Mabs{}$ clusters together all the~$p$ processes, except the process $node_i$ under study (where $i$ is arbitrary, \ie{} a free variable); $\Mabs{}$ also abstracts away the timing information contained in~$M$.
We then use SaveProver to infer two basic properties P1 and P2 for $\Mabs{}$.

In a second phase, we consider an auxiliary simple (abstract) model $T$ 
of~$M$ which merely contains relevant timing information; we then use a parametric timed model checker to infer a third property (P3) for~$T$.
The parametric timed model checker is required due to the \emph{uncertain} periods, that can have any value in $[\periodMin{} , \periodMax{}]$ but remain constant over the entire execution.

In the third phase, we consider again the model $\Mabs{}$, and integrate P1--P3 to SafeProver as assumptions, which allows us to infer a fourth property~P4.
The properties P1 and P4 express together a statement equivalent to the 
desired correctness property P of the leader election algorithm. 
The advantage of reasoning with abstract models $\Mabs{}$ and $T$ 
rather than directly to~$M$, is to prove~P for a large number~$p$ of processes.

We now describe our method step by step in the following.

\subsection{Abstract model $\Mabs{}$ and proof of P1-P2}
The idea is to model the system as one node $node_i$
(the node of interest) interacting with the rest of the network:
$node_i$ receives messages from the other nodes which are clustered into a single abstract process (see \cref{figure:abstraction}).
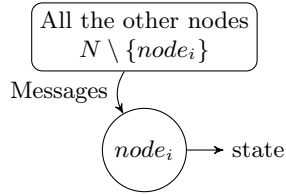
\begin{figure}[tb]
	\centering 
	\begin{tikzpicture}[scale=1.5, auto, ->, >=stealth']
		\node[draw, rounded corners, align=center] at (0, 1) (all) {All the other nodes\\$N \setminus \{ node_i \}$};
 
		\node[draw, circle] at (0, 0) (nodei) {$node_i$};
		
		\node[] at (1, 0) (state) {state};
		\path (all) edge[bend right] node[left,align=center]{Messages} (nodei);
		\path (nodei) edge[] (state);
	\end{tikzpicture}
	\caption{Scheme of model $\Mabs{}$ with node $i$ under study interacting with the cluster of all the other nodes}
	
	\label{figure:abstraction}
\end{figure}
In the abstract model $\Mabs{}$, each node can take any state at any activation, with no regards to the parity ($node_i.EvenActivation$), what has been previously sent, what $node_i$ is sending.
We only consider the activation of $node_i$.
The rest of the network is abstracted by the messages contained in the mailbox of~$node_i$. 
Since we assume that at least one clean round has passed, we always have a message from a working node in the mailbox.
The code is given in \cref{algo:abstractmodelM}.
(Note its analogy with the SafeProver code of \cref{algo:modelM} for~$M$.)
The first four lines define free variables.

The list of assumptions that the solver can make on the messages received is denoted by $List\_of\_assumptions\_on\_message_j$.
This list is initially empty, and augmented with ``guarantees'' (\aka{} ``proven properties'') on the other nodes as they are iteratively generated by the solver.
%
%
The first proven  properties are:
\begin{itemize}
	\item P1: $ (Activation(j) \geq 2 \ \land\  node_j.id \neq maxId)  \Rightarrow\ node_j.state = \statusFollower{}$
	\item P2: $ (Activation(j) \geq 2 \ \land\  node_j.id = maxId) $\\ 
\hspace*{\fill} $\Rightarrow node_j.state \in \{\statusCandidate{}, \statusLeader{}\}$
\end{itemize}\ea{note, j'ai remplacé des variables un peu au hasard}%
In these properties, $j$ is a free variable: therefore, it is ``fixed'' among one execution, but can correspond to any of the node IDs.
The two properties state that, after two rounds, a node which has not the largest ID is necessarily a follower (P1), or a candidate or a leader if it has the largest ID (P2).
As said before, P1 and P2 are then integrated to $List\_of\_assumptions\_on\_message_j$.

$Guarantee\_to\_prove$ contains iteratively P1, then P1 and~P2, then P1, P2 and~P4.

\begin{algorithm}[tb]
	Assume $i \in \{1,\ldots,p\}$
	
	Assume $node_{i}.id \in \grandn$
	
	Assume $node_{i}.state \in \{\statusFollower{},\statusCandidate{},\statusLeader{}\}$
	
	Assume $node_{i}.EvenActivation \in \{\BTrue{}, \BFalse\}$
	
	$Activation(i) \leftarrow 0$
	
	\While{$\BTrue$}{
		\For{$j\in\{1,\ldots,p\}$}{
			$message_{j} \in  \{node_{j}.id\} \times  \{\statusFollower{},\statusCandidate{},\statusLeader{}\}$
			
			\textbf{Assume:}\  $List\_of\_assumptions\_on\_message_{j}$
			
			$node_{i}.mailbox.enqueue(message_{j})$
		} 

		$UpdateNode(i)$
		
		$Activation(i) ++ $
		
		$Guarantee\_to\_prove$
	} 

	\caption{SafeProver code for abstract model $\Mabs{}$}
	\label{algo:abstractmodelM}
\end{algorithm}

\subsection{Abstract model $T$ and proof of P3}\label{ss:abstraction2}

To represent the timed abstract model $T$ of~$M$, we use an extension of the formalism of timed automata~\cite{AD94}, a powerful extension of finite-state automata with clocks, \ie{} real-valued variables that evolve at the same time.
These clocks can be compared with constants when taking a transition (``guards''), or to remain in a discrete state (``invariants'').
Discrete states are called \emph{locations}.
Timed automata were proven successful in verifying many systems with interactions between time and concurrency, especially with the state-of-the-art model-checker \uppaal{}~\cite{Kim}.
However, timed automata cannot model and verify arbitrary periods: while it is possible to model a different period at each round, it is not possible to first fix a period once for all (in an interval), and then use this period for the rest of the execution.
We therefore use the extension ``parametric timed automata''~\cite{AHV93,Andre17STTT} allowing to consider \emph{parameters}, \ie{} unknown constants (possibly in an interval).
\imitator{}~\cite{AFKS12} is a state-of-the-art model checker supporting this formalism.

In our method, the timed abstract model $T$ of~$M$ is a product of two similar parametric timed automata representing the node $i$ under study and 
a generic node $j$ belonging to ${\cal N}\setminus\{i\}$ respectively.
Each parametric timed automaton contains a single location.
The parametric timed automaton corresponding to $node_i$ uses an activation period $per_i$ that we model as a parameter.
Indeed, recall that the period belongs to an interval: taking a value in the interval at each round would not be correct, as the period would not be constant.
This is where we need parameters in our method.
In addition, we constrain this parameter $per_i$ to belong to $[\periodMin{},\periodMax{}]$.
Each automaton has its own clock $c_i$ that is used to measure how much time has passed since the last activation.
Each automaton has a discrete variable\footnote{%
	Discrete variables are global Boolean- or integer-valued variables, that can be read or written in transition guards.
	If their domain is finite they are syntactic sugar for a larger number of locations.
} $Activation(i)$ which is initialized at 0 and is used to count the number of activations for this node.
We give the constraint on $c_i$ at the beginning that $c_i \in [0, per_i + \jitterMax{}]$.
An activation can occur as soon as $c_i$ reaches
$per_i + \jitterMin{}$. This is modeled by the guard $c_i \geq per_i + \jitterMin{}$ on the transition that resets $c_i$
 and increment $Activation(i)$.
 An activation can occur as long as $c_i$ is below or equal to $per_i + \jitterMax{}$. This is modeled by the invariant 
 $c_i \leq per_i + \jitterMin{}$ on the unique location of the automaton. This invariant forces the transition to occur when $c_i$ reaches its upper bound.
This parametric timed automaton is represented in \cref{fig:5}.\footnote{%
	The color code is that of \imitator{} automated \LaTeX{} outputs: clocks are in blue, parameters in orange, and discrete variables in pink.
}
The other component representing the cluster of the rest of the nodes is modeled similarly as a generic component $node_j$.

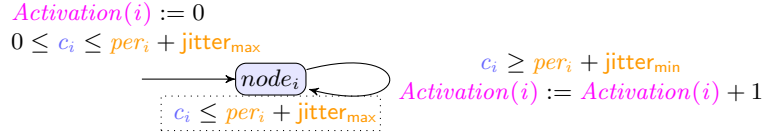
\begin{figure*}[tb]
	\centering

		\begin{tikzpicture}[scale=3, auto, ->, >=stealth']
 
		\node[location, initial] at (0,0) (l1) {$node_i$};
		\node [invariant,below] at (l1.south) {\begin{tabular}{@{} c @{\ } c@{} }& $ \styleclock{c_i} \leq \styleparam{per_{i}} + \styleparam{\jitterMax{}}$\\\end{tabular}};

		\node[above left,align=left] at (l1.north) {$\styledisc{Activation(i)} := 0$\\$0 \leq \styleclock{c_i} \leq \styleparam{per_{i}} + \styleparam{\jitterMax{}}$};
 
		\path (l1) edge[loop right] node[right,align=center]{$\styleclock{c_i} \geq \styleparam{per_{i}} + \styleparam{\jitterMin{}}$ \\ $\styledisc{Activation(i)} := \styledisc{Activation(i)} + 1$} (l1);
	\end{tikzpicture}
	
	\caption{Component~1 of timed model~$T$} 
	\label{fig:5}
\end{figure*}

For nodes $node_i$ 
and $node_j$, the property  that we want to specify
corresponds in the direct model~$M$
(without abstraction) of \cref{section:direct} to:
\begin{itemize}
\item 
$(Activation(i) \leq 13 \ \land\  Activation(j) \leq 13)$\\ 
\hspace*{\fill} $\Rightarrow\ \mid Activation(i) - Activation(j) \mid \ \leq\  2$.
\end{itemize}

In our timed abstract model $T$, such a property becomes:
\begin{itemize} \item
(P3):
$\forall i\in\{1,\dots,p\}\ Activation(j) \leq 13 \Rightarrow $\\
\hspace*{1mm} 
$ -2 \leq Activation(j) - Activation(i) \leq 1$.
\end{itemize}
where $Activation(i)$ denotes the number of activations of node~$i$ since the last clean round.

The value ``13'' has been obtained experimentally: smaller numbers led the algorithm to fail (the property was not satisfied).
Intuitively, it consists in the number of activations by which we are sure the leader will eventually be elected.

The proof of  P3 is obtained by adding to the model an observer\footnote{%
	An observer is an additional automaton that can synchronize with the system (using synchronized actions, clocks or discrete variables values), without modifying its behavior nor blocking it.
	See \eg{} \cite{ABBL98,Andre13ICECCS}.
} automaton checking the value of the discrete variables $Activation(i)$ and $Activation(j)$, which goes to a {\em bad} location when the property is violated.
The property is then verified
by showing  that the {\em bad} location is not reachable.
For the values of the timing constants\ea{Romain avait mis: ``$\periodMin{} = 39000$, $\periodMax{} = 41000$, $\jitterMin{} = -500$, and $\jitterMax{} = 500$,'' Mais non, pourquoi avoir fait un rescaling???}
in \cref{table:constants},
\imitator{} proves P3 (by showing the non-reachability of the bad location) in $12$\,s.
Recall that, thanks to our assumption on the number of nodes, we only used two nodes in the input model for \imitator{}.

In the next part, we show how the addition of~P3 as an assumption in the 
original abstract model allows to prove the desired property~P for a large number of nodes.

\subsection{Proof of P using P1--P3 as assumptions}

In addition to P1-P2,
we now put P3 ($Activation(j) \in [Activation(i) - 1; Activation(i) + 2]$) as an element of $List\_of\_assumptions\_on\_message_j$ used
in the SafeProver code of $\Mabs{}$ (see \cref{algo:abstractmodelM}).
SafeProver is then able to generate the following property:
\[ P4 : (Activation(i) \geq 4 \ \land\  node_i.id = maxId) \Rightarrow node_i.state = \statusLeader{} \]
Property~P4 states that the node
with the highest ID will declare itself as \statusLeader{} after at most 4 activations.
Besides, property~P1 states that a node, not having the highest ID, is in the state $\statusFollower{}$ within at most 2 activations.  
Properties~P1 and~P4 together thus express a statement equivalent to the desired correctness property~P.
The global generation of properties (P1), (P2) and (P4) by SafeProver takes the computation times tabulated in \cref{table:SMT2}.
As one sees, the computation time is now smaller by an order of magnitude than the ones given in \cref{table:SMT}, thus showing the good scalability of our method.

\begin{table}[tb]
	\centering
	\caption{Computation times}
	\begin{tabular}{| r | r |}
	\hline
		\cellHeader{Nodes} & \cellHeader{Time (s)}\\
	\hline
	$p = 500$ & $13.34$\\ 
	\hline
	$p = 1000$ & $45.95$\\
	\hline
	$p = 5000$ & $623.46$\\
	\hline
	\end{tabular}

	\label{table:SMT2}
\end{table}

\begin{remark}
	Note that verifying the model for~5,000 nodes also gives a guarantee for any smaller of nodes.
	Indeed, we can assume that an arbitrary number of nodes are in mode~$\modeOff{}$, and remain so, which is equivalent to a smaller number of nodes.
\end{remark}

\subsection{Discussion}

\paragraph{Soundness}
We briefly discuss the soundness of the algorithm.
First, note that the assumptions used above have been validated by the system designers (\ie{} those who designed the algorithm).
Second, SafeProver validated the assumptions, \ie{} proved that they were not inconsistent with each other (which would result in an empty model).

Now, the abstraction used in \cref{ss:abstraction2}, \ie{} to consider only two nodes, is the one which required most human creativity.
Let us briefly discuss its soundness.
Our abstraction allows to model the sending of any message, which includes the actual message to be sent in the actual system.
The fact that a message was necessarily received in the actual system between two (real) executions of the node under study is given by the fact that all nodes necessarily execute at least once in the last two periods (see \cref{lemma:2}).
Of course, this soundness is only valid under our own assumptions on the variability of the period, considering the constants in \cref{table:constants}: if the period of one node is~1 while the other is~100, our framework is obviously not sound anymore.

\todo{clear overabstraction due to the abstract node allowing everything}

\paragraph{Parametric vs.\ parametrized model checking}
As shown in \cref{table:SMT2}, we verified the model for a \emph{constant} number of nodes.
This comes in contrast with the recent work on parameterized verification (\eg{} \cite{ADRST16,CDZ17,CDZ17b}).
However, while these latter consider a parameterized number of nodes, they consider \emph{non-parametric} timed models; in contrast, we need here parametric timed models to be able to represent the uncertainty on the periods.
Combining both worlds (parameterized number of nodes with parametric timed models) would be of interest---but remains a very challenging objective.

\section{Conclusion}\label{section:conclusion}

We described a method combining abstraction, SafeProver and parametric timed model-checking in order to prove the correctness of a variant of an asynchronous leader election algorithm designed by Thales.
Our approach can efficiently verify the leader election after a fixed number of rounds for a large number~$p$ of processes (up to $p=5000$).
The method relies on the construction of two abstract models $\Mabs{}$ and $T$ of the original model~$M$.
Although it is intuitive, it could be interesting to prove formally that each abstraction is {\em correct} in the sense that it {\em over-approximates} all the behaviors of~$M$.

\subsection*{Perspectives}
Many variants of the algorithm can be envisioned (loss of messages, non-instantaneous transmission, non-complete graph topology, \dots).
The fault model could also be enriched.
It will then be also interesting to propose extensions of our abstraction-based method to prove the correctness of such extensions.

The correctness of the method relies on the order of magnitude of the constants used (\cref{remark:constants}).
For different intervals, it might be necessary to both adapt the algorithm (read messages only every $k$ activations) but also the assumptions used in the proof using abstraction, a manual and possibly error-prone process.
A more general verification method would be desirable.

In addition, the number of activations in our correctness property (``after 13 activations, the leader is elected'') was obtained using an incremental verification (values of up to~12 all gave concrete counterexamples).
As a future work, we would like to automatically infer this value too, \ie{} obtaining the minimal value of activations before a leader is guaranteed to be elected.

Finally, adding probabilities to model the fault of nodes will be of interest.

\section*{Acknowledgment}

We thank anonymous reviewers for very useful remarks and suggestions.

\bibliography{bully}

\end{document}